\def\R{\mathbb{R}}
\def\eps{\varepsilon}
\def\hY{\hat{Y}}
\def\A{{\mathcal A}}
\def\E{{\mathbb E}}
\def\T{{\mathcal T}}
\def\X{{\mathcal X}}
\def\Y{{\mathcal Y}}
\def\cR{{\mathcal R}}
\def\Z{{\mathcal Z}}
\def\U{{\mathcal U}}
\def\V{{\mathcal V}}
\newcounter{example}
\newtheorem{theorem}{Theorem}
\newtheorem{corollary}{Corollary}
\newtheorem{lemma}{Lemma}
\theoremstyle{remark}
\newtheorem{remark}{Remark}
\newcommand{\markov}{\mathrel\multimap\joinrel\mathrel-%
\mspace{-9mu}\joinrel\mathrel-}
\tikzstyle{RectObject}=[rectangle,fill=white,draw,line width=0.2mm]
\tikzstyle{line}=[draw]
\tikzstyle{arrow}=[draw, -latex]
\begin{document}

\title{\vspace{5.5mm}Lossless Secure Source Coding: ~~~~~~~~~Yamamoto's Setting}
\author{\IEEEauthorblockN{Shahab Asoodeh\thanks{This work was supported in part by NSERC of Canada.}, Fady Alajaji, and Tam\'{a}s Linder}
    \IEEEauthorblockA{Department of Mathematics and Statistics, Queen's University
    \\\{asoodehshahab, fady, linder\}@mast.queensu.ca}}
\restoregeometry
\maketitle

\begin{abstract}
Given a private source of information, $X^n$ and a public correlated source, $Y^n$, we study the problem of encoding the two-dimensional source $(X^n, Y^n)$ into an index $J$ such that a remote party, knowing $J$ and some external side information $Z^n$, can losslessly recover $Y^n$ while any eavesdropper knowing $J$ and possibly a correlated side information $E^n$ can retrieve very little information about $X^n$. We give general converse results for the amount of information about $X^n$ that might be leaked in such systems and and also achievability results that are optimal in some special cases.
\end{abstract}
\begin{IEEEkeywords}
Equivocation, information leakage, utility, privacy, lossless source coding with side information.
\end{IEEEkeywords}

\section{Introduction}
Information-theoretic secrecy models concern a tradeoff between utility and privacy. Given a source $Y^n$, the goal is to transmit this source securely and reliably over a noiseless public channel which might be perfectly observed by a passive adversary. The utility is defined as the accuracy in the recovering of $Y^n$ by a remote receiver and the privacy is defined as the uncertainty of the source given the message sent over the channel. However, in some cases, it may be desirable to define utility and privacy for two different sources, that is, we want the receiver to know $Y^n$ with some level of accuracy while revealing very little information about a correlated source $X^n$, which we refer to as the private source.

To motivate this setting, consider the following example. Suppose $Y$ denotes an attribute of a bank customer that a trusted advertising company would like to target and $X$ denotes another, more sensitive, attribute of the customer. The bank has database $(X^n, Y^n)$ corresponding to $n$ different users. The company pays the bank to receive $Y^n$ as accurately as possible. However, some governing laws prohibit the database $X^n$ from being revealed too extensively over public communication channels. Consequently, the data given to the company must be chosen so that at most a prescribed amount of information is revealed about $X^n$ over the communication channel while the recovery of $Y^n$ by the company satisfies some level of quality.

Inspired by Yamamoto \cite{yamamotoequivocationdistortion} where a lossy source coding problem is studied under a privacy constraint, we consider a secure lossless source coding model in which an encoder (Alice) encodes a two-dimensional source $(X^n,Y^n)$ such that the receiver (Bob) is able to reconstruct $Y^n$ correctly with high probability and the leakage of information (the information obtained by an eavesdropper, Eve) about $X^n$ is no more than $\Delta\geq0$. It is clear that no non-trivial level of privacy can be obtained if no side information is available to Bob. Hence, we assume Bob has access to some correlated side information and after observing the channel output wants to recover $Y^n$ with asymptotically vanishing error probability. We study this problem in terms of the compression rate and also the information leakage about $X^n$ (or equivalently the equivocation between the compressed and the private data). We give converse results for different cases including when Bob has coded or uncoded side information, when Eve has uncoded side information, or when the private source, $X^n$, is hidden even from Alice.

When $X=Y$, the problem we consider here reduces to a well-known model which has been extensively studied, for example see \cite{Secure_distributed_Vinod, Secure_lossless_compression, Secrure_lossless_Gunduz_ISIT, Secure_Source_Piantanida, Secure_source_Tandon}. In particular, Prabhakaran and Ramchandran \cite{Secure_distributed_Vinod} considered a similar secure lossless setting with $X=Y$ and Bob and Eve having correlated uncoded side information. They focused on the best achievable information leakage rate when the public channel has not rate limit. G\"{u}nd\"{u}z et al.\ \cite{Secure_lossless_compression}, \cite{Secrure_lossless_Gunduz_ISIT} gave converse and achievability bounds for a similar setting for both compression rate and information leakage which do not necessarily match. Tandon et al.\ \cite{Secure_source_Tandon} considered a simpler case in which Eve has no side information, gave a single letter characterization of the optimal rates, and information leakage and showed that a simple coding scheme based on binning, similar to the one proposed by Wyner in \cite{Wyner_source_coidng_wit_SI}, is indeed optimal with and without the privacy constraint. Our results recover all these results in the special case of $X=Y$.

The rest of this paper is organized as follows. In Section II, we formally define our problem and state an outer bound which is our main result. In Section III, we consider a more general model in which Eve has side information and  present another outer bound. We then present a coding scheme which is shown to be optimal in some special cases. We complete the paper with some concluding remarks in Section IV.

\section {Yamamoto's Lossless Source Coding: Coded Side Information at Bob}
Yamamoto \cite{yamamotoequivocationdistortion} considered a lossy source coding scheme with a privacy constraint at the legitimate decoder. This is contrasted with the typical information-theoretic secrecy models in which the privacy is defined as the uncertainty of the source against a passive eavesdropper. In this model, having observed $(X^n, Y^n)$, the encoder $\varphi:\X^n\times \Y^n\to\{1, 2, \dots, 2^{nR}\}$, transmits a message to the decoder, $\psi:\{1, 2, \dots, 2^{nR}\}\to\hat{\Y}^n$, which is required to recover $Y^n$ within some distortion $D$ while revealing little information about $X^n$. More precisely, for a given distortion measure $d:\Y\times \hat{\Y}\to \R_+$, we require $\frac{1}{n}\sum \E[d(Y_i, \hY_i)]\leq D$ while the normalized uncertainty about $X^n$ at the decoder is lower-bounded, i.e., $\frac{1}{n}H(X^n|\varphi(X^n, Y^n))\geq E$ for a non-negative $E\leq H(X)$. This requirement is different from the privacy constraint usually considered in information-theoretic secrecy (e.g., \cite{Secure_lossless_compression}, \cite{Secure_lossy_coding}, \cite{Secure_source_Tandon}, and \cite{Secure_Source_Piantanida}), in that here the utility and privacy are measured with respect to two different sources $Y$ and $X$, respectively. In this sense, $X$ and $Y$ correspond to the private and public sources, respectively. The correlation between $X$ and $Y$ makes the utility and privacy constraints contradicting.
\newcounter{tempequationcounter}

We study a similar model as Yamamoto's but for \emph{lossless} compression. Clearly, if no side information is available to the decoder, then the eavesdropper can obtain as much information about $X^n$ as the legitimate decoder and hence only trivial levels of privacy can be achieved when lossless compression of $Y$ is required. We, therefore, assume that side information is provided at the decoder, as depicted in Fig.~\ref{Figue:Yamamoto_Lossless}.
\tikzstyle{int}=[draw, fill=blue!20, minimum size=2em]
\tikzstyle{init} = [pin edge={to-,thin,black}]
\tikzstyle{init_to} = [pin edge={-to,thin,black}]
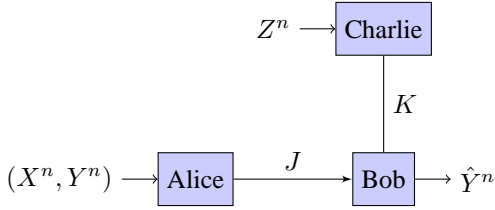
\begin{figure}[h]
\centering
\begin{tikzpicture}[node distance=2.5cm,auto,>=latex']
    \node [int, pin={[init]left:$(X^n,Y^n)$}] (a) [node distance=1.5cm] {Alice};
    \node (b) [left of=a,node distance=2cm, coordinate] {a};
    \node [coordinate] (end) [right of=b, node distance=2cm]{};
    \node (b) [left of=a,node distance=2cm, coordinate] {a};
    \node [int, pin={[init_to]right:$\hat{Y}^n$}] (c) [right of=a] {Bob};
    \node [int, pin={[init]left:$Z^n$}] (d) [above of=c, node distance=2cm] {Charlie};
    \node [coordinate] (end) [right of=c, node distance=2cm]{};
    \path[->] (a) edge node {$J$} (c);
    \path[-] (d) edge node {$K$} (c);
\end{tikzpicture}
\caption{Yamamoto's lossless source coding.}
\label{Figue:Yamamoto_Lossless}
\end{figure}

A $(2^{nR_A}, 2^{nR_C}, n)$ code for private lossless compression in this setup is composed of two encoding functions at Alice and Charlie, respectively, $f_A:\X^n\times \Y^n\to \{1, 2, \dots, 2^{nR_A}\}$ and $f_C:\Z^n\to \{1, 2, \dots, 2^{nR_C}\}$, and a decoder at Bob, $f_B:\{1, 2, \dots, 2^{nR_A}\}\times \{1, 2, \dots, 2^{nR_C}\}\to \hat{\Y}^n$, where $(X^n, Y^n, Z^n)$ are $n$ independent and identically distributed (i.i.d.) copies of $(X,Y,Z)$ with joint distribution $P(x,y,z)$. We assume that both encoders communicate to Bob over noiseless channels; however, the channel between Alice and Bob is subject to eavesdropping and hence a passive party can have access to the message $J$ transmitted over this channel. A triple $(R_A, R_C, \Delta)\in\R^3_+$ is said to be achievable if for any $\eps>0$, there exists a $(2^{nR_A}, 2^{nR_C}, n)$ code for $n$ large enough such that
\begin{eqnarray}
  \Pr(f_B(J, K)\neq Y^n) &<& \eps, \label{error-probability-constraint}\\
  \frac{1}{n} H(X^n|J)&\geq& \Delta-\eps \label{privacy-constraint},
\end{eqnarray}
where $J:=f_A(X^n, Y^n)$ and $K:=f_C(Z^n)$. We denote the set of all achievable triples $(R_A, R_C, \Delta)$ by $\cR$. One special case of interest is when $J$ contains absolutely no information about the private source, that is, when $J$ is independent of $X^n$, which is called perfect privacy.

We note that for a special case of $X=Y$, inner and outer bounds on the achievable region were initially presented in \cite[Theorem 3.1]{Secrure_lossless_Gunduz_ISIT}, although these bounds do no match in general. Tight bounds were then given in
\cite[Theorem 1]{Secure_source_Tandon} whose achievability resembles the binning scheme proposed by Wyner \cite{Wyner_source_coidng_wit_SI} for standard source coding with coded side information at the decoder. This therefore shows that the privacy constraint  \eqref{privacy-constraint} does not change the optimal scheme.
\begin{theorem}\label{Theorem_lossless_Yamamoto}
For any achievable triple $(R_A, R_C, \Delta)\in \cR$ we have
\begin{eqnarray*}
  R_A &\geq& H(Y|V), \\
  R_C &\geq & I(Z;V), \\
  \Delta &\leq & I(X,Y;V)+ H(X|U)-H(Y|U),
\end{eqnarray*}
for some  auxiliary random variables $V\in \V$ and $U\in \U$ such that $P(x,y,z,u,v)=P(x,y,z)P(v|z)P(u|x,y)$ with  $|\U|\leq |\X|\times|\Y|+1$ and $|\V|\leq |\Z|+2$.
 \end{theorem}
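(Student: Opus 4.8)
The plan is to establish a single-letter converse in the standard Wyner--Ahlswede--K\"orner style, treating Charlie as a helper that conveys coded side information about $Z^n$. The error constraint \eqref{error-probability-constraint} together with Fano's inequality gives $H(Y^n\mid J,K)\le n\eps_n$ with $\eps_n\to 0$. I would then introduce, for each $i$, the auxiliaries $V_i:=(K,Z^{i-1})$ and $U_i:=(J,X^{i-1},Y^{i-1})$, and verify the two Markov chains $V_i-Z_i-(X_i,Y_i)$ and $U_i-(X_i,Y_i)-Z_i$; both follow from the i.i.d.\ structure of $(X^n,Y^n,Z^n)$ together with the fact that $K$ (resp.\ $J$) is a function of $Z^n$ (resp.\ $(X^n,Y^n)$). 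A crucial preliminary observation is that no bound couples $U$ and $V$ jointly: $R_A,R_C$ involve only $(V,X,Y,Z)$, while the equivocation bound uses $I(X,Y;V)$, $H(X\mid U)$ and $H(Y\mid U)$ separately. Hence at the end I may re-couple $U$ and $V$ to be conditionally independent given $(X,Y,Z)$ without altering any single-auxiliary marginal, and this is exactly what yields the factorization $P(u\mid x,y)P(v\mid z)$ claimed in the statement.

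The two rate bounds are then routine. For Charlie, $nR_C\ge H(K)\ge I(Z^n;K)=\sum_i I(Z_i;K\mid Z^{i-1})=\sum_i I(Z_i;V_i)$, using that the $Z_i$ are i.i.d. For Alice,
\[
nR_A\ge H(J)\ge H(J\mid K)\ge I(Y^n;J\mid K)=H(Y^n\mid K)-H(Y^n\mid J,K)\ge H(Y^n\mid K)-n\eps_n,
\]
and $H(Y^n\mid K)=\sum_i H(Y_i\mid Y^{i-1},K)\ge\sum_i H(Y_i\mid Y^{i-1},Z^{i-1},K)=\sum_i H(Y_i\mid V_i)$, where the last equality is the i.i.d.\ Markov relation $Y_i-(K,Z^{i-1})-Y^{i-1}$: conditioned on $(Z^{i-1},K)$, the past source $Y^{i-1}$ depends only on $Z^{i-1}$ and its own noise, while $K=f_C(Z^n)$ given $Z^{i-1}$ constrains only $(Z_i,\dots,Z_n)$. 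Introducing a time-sharing variable $Q\sim\sUnif\{1,\dots,n\}$ and setting $V=(V_Q,Q)$, $Y=Y_Q$, $Z=Z_Q$ turns these into $R_A\ge H(Y\mid V)-\eps_n$ and $R_C\ge I(Z;V)$.

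The equivocation bound is the crux. Starting from \eqref{privacy-constraint}, $n(\Delta-\eps)\le H(X^n\mid J)$, I would inject the reconstruction of $Y^n$ via $H(X^n\mid J)\le\big(H(X^n\mid J)-H(Y^n\mid J,K)\big)+n\eps_n$ and then single-letterize the difference $H(X^n\mid J)-H(Y^n\mid J,K)$, aiming to peel off a term $I(X_i,Y_i;V_i)$ built from the $Z$-history in $K$ while the residual telescopes into $H(X_i\mid U_i)-H(Y_i\mid U_i)$. I expect this to be the main obstacle for two reasons. First, one must keep the \emph{tight} term $I(X,Y;V)$ rather than the weaker $I(Z;V)$ that a crude data-processing step on $V_i-Z_i-(X_i,Y_i)$ would produce; preserving this gap requires expanding $I(X^n,Y^n;K)$ against the $Z^{i-1}$ histories instead of bounding through $I(Z^n;K)$. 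Second, expanding $H(X^n\mid J)$ naturally produces conditioning on the $(X,Y)$-history, whereas $V_i$ carries the $Z$-history, so reconciling the two almost certainly requires the Csisz\'ar sum identity to swap past and future indices. Once $\sum_i\big[I(X_i,Y_i;V_i)+H(X_i\mid U_i)-H(Y_i\mid U_i)\big]$ is in hand, the same variable $Q$ gives $\Delta\le I(X,Y;V)+H(X\mid U)-H(Y\mid U)$.

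Finally, the cardinality bounds follow from the support lemma. The variable $V$ must preserve $P_Z$ together with the three functionals $H(Z\mid V)$, $H(Y\mid V)$ and $H(X,Y\mid V)$ appearing in $R_C$, $R_A$ and $\Delta$, giving $|\V|\le(|\Z|-1)+3=|\Z|+2$; the variable $U$ must preserve $P_{X,Y}$ together with the two functionals $H(X\mid U)$ and $H(Y\mid U)$, giving $|\U|\le(|\X||\Y|-1)+2=|\X||\Y|+1$.
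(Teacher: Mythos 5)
Your overall architecture --- Fano's inequality, single-letterization with a time-sharing variable, re-coupling $U$ and $V$ at the end, and the support lemma for cardinalities --- matches the paper's proof, and several pieces are correct as written. The two rate bounds go through with your choice $V_i=(K,Z^{i-1})$ (the paper instead uses $V_i=(K,X^{i-1},Y^{i-1})$; either choice satisfies $(X_i,Y_i)\markov Z_i \markov V_i$ and yields $R_A\geq H(Y|V)$, $R_C\geq I(Z;V)$), and your explicit remark that no bound couples $U$ and $V$, so that they may be re-coupled as conditionally independent given $(X,Y,Z)$ to get the factorization $P(v|z)P(u|x,y)$, is exactly the step the paper leaves implicit. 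Your $V_i$ even works in the equivocation chain, since $I(X^n,Y^n;K)=\sum_i I(X_i,Y_i;K,X^{i-1},Y^{i-1})\leq \sum_i I(X_i,Y_i;K,Z^{i-1})$ by the Markov chain $(X_i,Y_i)\markov (K,Z^{i-1})\markov (X^{i-1},Y^{i-1})$.

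The genuine gap is at what you yourself call the crux: the single-letter identity behind the equivocation bound is never established, and the auxiliary you committed to, $U_i:=(J,X^{i-1},Y^{i-1})$, is the wrong one. The paper proves (its Lemma~\ref{lemma_single_letter} with $E=\emptyset$, via the Csisz\'{a}r sum identity) that
\begin{equation*}
H(X^n|J)-H(Y^n|J)=\sum_{i=1}^n\bigl[H(X_i|U_i)-H(Y_i|U_i)\bigr]
\end{equation*}
holds for $U_i:=(X_{i+1}^n,Y^{i-1},J)$, i.e., the auxiliary must mix the \emph{future} of $X$ with the \emph{past} of $Y$: expanding $H(X^n|J)$ backwards and $H(Y^n|J)$ forwards produces the cross terms $\sum_i I(X_i;Y^{i-1}|X_{i+1}^n,J)$ and $\sum_i I(Y_i;X_{i+1}^n|Y^{i-1},J)$, and it is precisely the Csisz\'{a}r sum identity that makes these cancel. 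With your past-past $U_i$ one instead gets
\begin{eqnarray*}
H(X^n|J)-H(Y^n|J)&=&\sum_{i=1}^n\bigl[H(X_i|U_i)-H(Y_i|U_i)\bigr]\\
&&+\,\sum_{i=1}^n\bigl[I(X_i;Y^{i-1}|X^{i-1},J)\\
&&\qquad\quad-\,I(Y_i;X^{i-1}|Y^{i-1},J)\bigr],
\end{eqnarray*}
and the residual sum has no definite sign and no identity forces it to vanish, so the required inequality $\Delta\leq I(X,Y;V)+H(X|U)-H(Y|U)$ cannot be extracted. You do anticipate that the Csisz\'{a}r sum identity is needed ``to swap past and future indices,'' but this contradicts your own definition of $U_i$ and the reconciliation is never carried out. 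Since this identity, followed by $H(Y^n|J)\leq n\eps_n+I(X^n,Y^n;K)\leq n\eps_n+\sum_i I(X_i,Y_i;V_i)$, constitutes essentially the entire content of the $\Delta$ bound, what is missing is the main step of the proof rather than a routine detail.
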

\begin{proof}
First note that Bob is required to reconstruct $Y^n$ losslessly given $J$ and $K$, and thus by Fano's inequality we have
\begin{equation}\label{Fano1}
    H(Y^n|J, K)\leq n\eps_n,
\end{equation}
where $\eps_n\to 0$ as $n\to \infty$.

We start by obtaining a lower bound for $R_A$ as follows:
\begin{eqnarray*}
  nR_A &\geq & H(J)\geq H(J|K) \\
   &=& H(Y^n, J|K)-H(Y^n|J, K)\\
   &\stackrel{(a)}{\geq}& H(Y^n, J|K)-n\eps_n\\
   &\geq& H(Y^n|K)-n\eps_n\\
   &=&\sum_{i=1}^n H(Y_i|Y^{i-1}, K)-n\eps_n\\
   &\geq &\sum_{i=1}^n H(Y_i|Y^{i-1}, X^{i-1}, K)-n\eps_n\\
   &\stackrel{(b)}{=}& \sum_{i=1}^n H(Y_i|V_i)-n\eps_n\\
   &\stackrel{(c)}{=}& H(Y_Q|V_Q, Q)-n\eps_n\\
   &\stackrel{(d)}{=}& nH(Y|V)-n\eps_n
\end{eqnarray*}
where $(a)$ follows from \eqref{Fano1}, and $(b)$ is due to the definition $V_i:=(Y^{i-1}, X^{i-1}, K)$. In $(c)$ we have introduced a time-sharing random variable $Q$ which is distributed uniformly over $\{1,2, \dots, n\}$ and is independent of $(X^n, Y^n, Z^n)$. In $(d)$ we have defined $V:=(V_Q, Q)$ and used the fact that $Y_Q$ has the distribution of $Y$ and hence we can replace $Y_Q$ with $Y$.

Next we obtain a lower bound on $R_C$:
\begin{eqnarray*}
  nR_C &\geq & H(K)= I(Z^n; K)=\sum_{i=1}^n I(Z_i; K|Z^{i-1}) \\
   &\stackrel{(a)}{=}& \sum_{i=1}^n I(Z_i; K,Z^{i-1})\\
   &\stackrel{(b)}{=}& \sum_{i=1}^n I(Z_i; K,Z^{i-1}, X^{i-1}, Y^{i-1})\\
   &\geq & \sum_{i=1}^n I(Z_i; K,X^{i-1}, Y^{i-1})=nI(Z_Q; V_Q, Q)\\
   &=& nI(Z; V)
\end{eqnarray*}
where $(a)$ is due to the fact that $Z_i$ is independent of $Z^{i-1}$ for each $i$ and $(b)$ follows from the Markov chain relation $Z_i\markov (K, Z^{i-1})\markov (Y^{i-1}, X^{i-1})$.

We now upper bound the equivocation that any asymptotically lossless scheme produces. First we show the following identity which expresses $H(X^n|J)$ in terms of $H(Y^n|J)$ and some auxiliary terms:
\begin{equation}\label{Identity-intermediate}
  H(X^n|J)-H(Y^n|J)=\sum_{i=1}^n[H(X_i|U_i)-H(Y_i|U_i)],
\end{equation}
where $U_i:=(X_{i+1}^n, Y^{i-1}, J)$. We will prove a general version of this identity later in Lemma~\ref{lemma_single_letter}.

\begin{figure*}[!t]
\normalsize
\setcounter{tempequationcounter}{\value{equation}}
\begin{IEEEeqnarray}{rCl}
\setcounter{equation}{5}
  0&\stackrel{(a)}{=}&\sum_{i=1}^nI(Y_i, E_i; X_{i+1}^n, E_{i+1}^n|J, Y^{i-1}, E^{i-1})-I(Y^{i-1}, E^{i-1}; X_i, E_i|J,X_{i+1}^n, E_{i+1}^n)   \nonumber\\
   &=& H(Y^n, E^n| J)-H(X^n,E^n| J)-\sum_{i=1}^n [H(Y_i,E_i|X_{i+1}^n, Y^{i-1}, E^{-i}, J)- H(X_i, E_i|X_{i+1}^n, Y^{i-1}, E^{-i}, J)]\nonumber\\
   &=& H(Y^n|E^n, J)-H(X^n|E^n, J)-\sum_{i=1}^n [H(Y_i|E_i, X_{i+1}^n, Y^{i-1}, E^{-i}, J)- H(X_i|E_i, X_{i+1}^n, Y^{i-1}, E^{-i}, J)]\nonumber\\
   &\stackrel{(b)}{=}&H(Y^n|E^n, J)-H(X^n|E^n, J)-\sum_{i=1}^n [H(Y_i|E_i, U_i)- H(X_i|E_i,U_i)]\label{eq:floatingequation_3}
\end{IEEEeqnarray}
\setcounter{equation}{\value{tempequationcounter}}
\hrulefill
\end{figure*}
The equivocation can be then be upper bounded as
\begin{eqnarray}
 n(\Delta-\eps)&\leq & H(X^n|J) \nonumber\\
   &\stackrel{(a)}{=}& H(Y^n|J)+ \sum_{i=1}^n[H(X_i|U_i)-H(Y_i|U_i)] \nonumber\\
   &=& H(Y^n|K, J)+I(Y^n; K|J)\nonumber\\
   &&+\sum_{i=1}^n[H(X_i|U_i)-H(Y_i|U_i)] \nonumber\\
   &\leq & n\eps_n + I(K;Y^n, X^n|J)\nonumber\\
   &&+\sum_{i=1}^n[H(X_i|U_i)-H(Y_i|U_i)] \nonumber\\
   &\stackrel{(b)}{\leq} & n\eps_n + I(K;X^n, Y^n)\nonumber\\
   &&+\sum_{i=1}^n[H(X_i|U_i)-H(Y_i|U_i)]\nonumber\\
   &= & n\eps_n +\sum_{i=1}^nI(K; X_i, Y_i| X^{i-1}, Y^{i-1})\nonumber\\
   &&+ \sum_{i=1}^n[H(X_i|U_i)-H(Y_i|U_i)]\nonumber\\
   &= & n\eps_n +\sum_{i=1}^nI(K, X^{i-1}, Y^{i-1}; X_i, Y_i)\nonumber\\
   &&+ \sum_{i=1}^n[H(X_i|U_i)-H(Y_i|U_i)]\nonumber\\
   &=& n\eps_n + \sum_{i=1}^nI(V_i; X_i, Y_i)\nonumber\\
   &&+ \sum_{i=1}^n[H(X_i|U_i)-H(Y_i|U_i)]\nonumber\\
   &=& n\eps_n+nI(V_Q; X_Q, Y_Q|Q)\nonumber\\
   &&+n[H(X_Q|U_Q, Q)-H(Y_Q|U_Q, Q)]\nonumber\\
   &\stackrel{(c)}{=}& n\eps_n+nI(V_Q, Q; X_Q, Y_Q)\nonumber\\
   &&+n[H(X_Q|U_Q, Q)-H(Y_Q|U_Q, Q)]\nonumber\\
   &\stackrel{(d)}{=}& n\eps_n\nonumber\\&&+n[I(V; X, Y)+H(X|U)-H(Y|U)], \nonumber
\end{eqnarray}
where $(a)$ follows from \eqref{Identity-intermediate}, $(b)$ follows from the Markov chain relation $J\markov (X^n, Y^n)\markov K$ and hence $I(X^n, Y^n;K|J)\leq I(X^n, Y^n; K)$, $(c)$ is due to the fact that $Q$ is independent of $(X_Q,Y_Q)$ and in $(d)$ we have introduced $U:=(U_Q, Q)$.

We note that by definitions of $U$ and $V$, the Markov chain conditions $(X, Y)\markov Z\markov V$ and $Z\markov (X, Y)\markov U$ are satisfied. The cardinality bounds given in the statement of the theorem can be proved using support lemma \cite{csiszarbook}.
\end{proof}
\begin{remark}
As mentioned earlier, the special case $X=Y$ is studied in \cite{Secure_source_Tandon} where it is shown that for any achievable triple $(R_A, R_C, \Delta)$, the optimal equivocation satisfies $\Delta\leq I(Y; V)$. We see that Theorem~\ref{Theorem_lossless_Yamamoto} yields the same result and thus gives a tight bound in this special case.
\end{remark}
In practice, the private source $X$ might not be directly available to Alice. In this case, her mapping is $f_A:\Y^n\to \{1, 2, \dots, 2^{nR_A}\}$ and the above theorem reduces to the following corollary.
\begin{corollary}\label{Corollary_lossless_Yamamoto}
When the source $X^n$ is not available to Alice, any achievable triple $(R_A, R_C, \Delta)$ satisfies
\begin{eqnarray*}
  R_A &\geq& H(Y|V), \\
  R_C &\geq & I(Z;V), \\
  \Delta &\leq & I(Y;V)+ H(X|U)-H(Y|U),
\end{eqnarray*}
for some $U\in \U$ and $V\in \V$ such that $P(x,y,z,u,v)=P(x,y,z)P(v|z)P(u|y)$ and $|\U|\leq |\Y|+1$ and $|\V|\leq |\Z|+2$.
\end{corollary}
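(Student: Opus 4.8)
The plan is to transcribe the proof of Theorem~\ref{Theorem_lossless_Yamamoto}, the only structural change being that the encoder output is now $J=f_A(Y^n)$, a deterministic function of $Y^n$ alone, and to show that this is exactly what sharpens the equivocation bound from $I(X,Y;V)$ to $I(Y;V)$. First I would note that the derivations of the two rate bounds $R_A\geq H(Y|V)$ and $R_C\geq I(Z;V)$ in Theorem~\ref{Theorem_lossless_Yamamoto} never invoke the dependence of $J$ on $X^n$; they rely only on Fano's inequality \eqref{Fano1}, the i.i.d.\ structure, and the definition $V_i:=(K,X^{i-1},Y^{i-1})$. Keeping the same $V_i$ and $V:=(V_Q,Q)$, both inequalities therefore carry over verbatim.

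The real work is the equivocation term. I would again start from $n(\Delta-\eps)\leq H(X^n|J)$ and apply the identity \eqref{Identity-intermediate} with $U_i:=(X_{i+1}^n,Y^{i-1},J)$ to write $H(X^n|J)=H(Y^n|J)+\sum_i[H(X_i|U_i)-H(Y_i|U_i)]$. Splitting $H(Y^n|J)=H(Y^n|K,J)+I(Y^n;K|J)$ and bounding $H(Y^n|K,J)\leq n\eps_n$ by Fano, the decisive difference from Theorem~\ref{Theorem_lossless_Yamamoto} appears: since $J$ is a function of $Y^n$, the Markov chain $J\markov Y^n\markov K$ holds, so $I(Y^n;K|J)\leq I(Y^n;K)$ directly, without the intermediate enlargement to $I(X^n,Y^n;K)$ that the corresponding step $(b)$ in the equivocation chain of Theorem~\ref{Theorem_lossless_Yamamoto} was forced to make. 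Single-letterizing via $I(Y^n;K)=\sum_i I(Y_i;K,Y^{i-1})\leq\sum_i I(Y_i;V_i)$ and then introducing the time-sharing variable $Q$ produces $I(Y;V)$ in place of $I(X,Y;V)$, while the leftover sum contributes $H(X|U)-H(Y|U)$ with $U:=(U_Q,Q)$, giving the claimed $\Delta\leq I(Y;V)+H(X|U)-H(Y|U)$.

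It remains to justify the factorization $P(x,y,z,u,v)=P(x,y,z)P(v|z)P(u|y)$. The chain $(X,Y)\markov Z\markov V$ is identical to that in Theorem~\ref{Theorem_lossless_Yamamoto}. The genuinely new ingredient is $(X,Z)\markov Y\markov U$: because $J=f_A(Y^n)$, conditioned on $Y_i$ the variable $U_i=(X_{i+1}^n,Y^{i-1},J)$ is a deterministic function of symbols belonging only to blocks with index $\neq i$, and by the i.i.d.\ assumption the block $(X_i,Y_i,Z_i)$ is independent of all other blocks; hence $(X_i,Z_i)\indep U_i\mid Y_i$. This is precisely the Markov condition that replaces $P(u|x,y)$ by $P(u|y)$ and collapses the cardinality bound to $|\U|\leq|\Y|+1$, after which the support lemma finishes the argument as before.

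I expect this last Markov verification to be the only real obstacle: one has to confirm that deleting $X^n$ from the encoder is exactly what decouples the auxiliary $U_i$ from the current private symbol $X_i$. Everything else is a faithful copy of the proof of Theorem~\ref{Theorem_lossless_Yamamoto} with $X^n$ removed from the conditioning and mutual-information terms it no longer influences.
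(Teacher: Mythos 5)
Your proposal is correct and is essentially the paper's own (much terser) argument: the paper likewise reuses the proof of Theorem~\ref{Theorem_lossless_Yamamoto}, the decisive modifications being exactly the ones you identify, namely that $J=f_A(Y^n)$ gives $J\markov Y^n\markov K$ (so $I(Y^n;K|J)\leq I(Y^n;K)$ without enlarging to $I(X^n,Y^n;K)$) and the new chain $(X_i,Z_i)\markov Y_i\markov U_i$. The only cosmetic difference is that the paper redefines $V_i:=(Y^{i-1},K)$, dropping $X^{i-1}$, so that $I(Y_i;K,Y^{i-1})=I(Y_i;V_i)$ holds with equality, whereas you retain Theorem~\ref{Theorem_lossless_Yamamoto}'s $V_i=(K,X^{i-1},Y^{i-1})$ and insert one extra enlargement step; both choices satisfy the required factorization $P(v|z)$ and yield the stated bounds.
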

\begin{proof}
The proof follows easily from the proof of Theorem~\ref{Theorem_lossless_Yamamoto}. In particular, introducing $V_i:=(Y^{i-1}, K)$ and $U_i:=(X_{i+1}^n, Y^{i-1}, J)$, we can follow easily the chain of inequalities given for the equivocation analysis with appropriate modifications. Since now $J=f_A(Y^n)$, we have $(X_i, Z_i)\markov Y_i\markov U_i$.
\end{proof}
\section{Yamamoto's Lossless Source Coding: Uncoded Side Information at Eve}
We now turn our focus to the case where there is an eavesdropper, Eve, with perfect access to the channel from Alice to Bob and also side information $E^n$. Unlike in the last section, in this model the achievable $(R_A, R_C, \Delta)$ has not been fully characterized in the case of $X=Y$. However, G\"{u}nd\"{u}z et al.\ \cite{Secure_lossless_compression} and Probhakaran and Ramchandran \cite{Secure_distributed_Vinod} showed that if $R_C>H(Z)$, that is uncoded side information is available at Bob, then $(R_A, \Delta)$ is an achievable pair if and only if $R_A\geq H(Y|Z)$ and $\Delta\leq \max [I(Y;Z|U)-I(Y;E|U)]$ where the maximization is taken over $U$ that satisfies $Z\markov Y\markov U$, thus providing a full single-letter characterization of the achievable rate-equivocation region. In this section, we assume coded side information is available at Bob and Eve has uncoded side information $E^n$. As in \cite{Secure_source_Tandon}, we assume that the Eve's side information $E^n$ forms the Markov chain $X^n\markov Y^n\markov E^n$.

\subsection{A Converse Result}
We consider the model depicted in Fig.~\ref{Figue:Yamamoto_Lossless_eavesdropper} in which Eve has access to side information $E^n$ which satisfies $E^n\to Y^n\to X^n$.
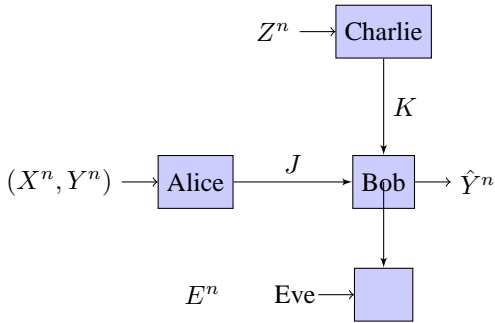
\begin{figure}[h]
\centering
\begin{tikzpicture}[node distance=2.5cm,auto,>=latex']
    \node [int, pin={[init]left:$(X^n,Y^n)$}] (a) [node distance=1.5cm] {Alice};
    \node [coordinate] (end) [right of=b, node distance=2cm]{};
    \node [int, pin={[init_to]right:$\hat{Y}^n$}] (c) [right of=a] {Bob};
    \node [int, pin={[init]left:$Z^n$}] (d) [above of=c, node distance=2cm] {Charlie};
    \hspace{-1.18cm}\node [int, pin={[init]left:$E^n$}] (e) [below of=c, node distance=1.5cm] {Eve};
    \hspace{1.18cm}
    \node [coordinate] (end) [right of=c, node distance=2cm]{};
    \node [coordinate] (Alice_channel) [above of=e, node distance=1.5cm]{};
    \path[->] (a) edge node {$J$} (c);
    \path[->] (d) edge node {$K$} (c);
    \hspace{-1.18cm}
    \draw[<-] (e) edge node {}(Alice_channel) ;
\end{tikzpicture}
\caption{Yamamoto's lossless source coding with eavesdropper having side information.}
\label{Figue:Yamamoto_Lossless_eavesdropper}
\end{figure}

The achievable $(R_A, R_C, \Delta)$ in this model is defined similarly as before with the utility constraint \eqref{error-probability-constraint} and the privacy constraint  \begin{equation}\label{privacy-constraint_eavesdropper}
    \addtocounter{equation}{1}
    \frac{1}{n}H(X^n|E^n, J)\geq \Delta-\eps.
\end{equation}
Before we get to an outer bound for the achievable region of this model, we need to state the following lemma which is a generalization of identity  \eqref{Identity-intermediate} that we used in the proof of Theorem~\ref{Theorem_lossless_Yamamoto}.
\begin{lemma}\label{lemma_single_letter}
Let $(J, X^n, Y^n, E^n)$ be jointly distributed according to $P(j, x^n, y^n, e^n)$. Then we can write:
\begin{eqnarray*}
  H(X^n|E^n, J)&&\hspace{-0.85cm}-H(Y^n|E^n, J) \\
   &=&  \sum_{i=1}^n\left[H(X_i|E_i, U_i)-H(Y_i|E_i, U_i)\right]
\end{eqnarray*}
 where $U_i:=(X_{i+1}^n, Y^{i-1}, E^{-i}, J)$ for each $1\leq i\leq n$ and $E^{-i}:=(E_{i-1}, E_{i+1}^n)$ .
\end{lemma}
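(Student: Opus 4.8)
The plan is to deduce this identity from a single application of the Csisz\'ar sum identity to two interleaved sequences, followed by bookkeeping of the conditioning sets. Concretely, I would form the two length-$n$ pair-sequences $A_i := (Y_i, E_i)$ and $B_i := (X_i, E_i)$ and condition everything on $J$. The Csisz\'ar sum identity then gives
\[
\sum_{i=1}^n I(A_i; B_{i+1}^n \mid A^{i-1}, J) = \sum_{i=1}^n I(B_i; A^{i-1} \mid B_{i+1}^n, J),
\]
so the difference of the two sides vanishes; this is exactly step $(a)$ of the displayed equation \eqref{eq:floatingequation_3}, since $A_i = (Y_i,E_i)$, $A^{i-1}=(Y^{i-1},E^{i-1})$, $B_{i+1}^n=(X_{i+1}^n,E_{i+1}^n)$.

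The second step is to expand each mutual information into a difference of conditional entropies and let the leading terms telescope. Writing $I(A_i; B_{i+1}^n \mid A^{i-1}, J) = H(A_i \mid A^{i-1}, J) - H(A_i \mid A^{i-1}, B_{i+1}^n, J)$ and summing the first term, the forward chain rule collapses it to $H(Y^n, E^n \mid J)$; writing $I(B_i; A^{i-1} \mid B_{i+1}^n, J) = H(B_i \mid B_{i+1}^n, J) - H(B_i \mid B_{i+1}^n, A^{i-1}, J)$ and summing, the backward chain rule collapses the first term to $H(X^n, E^n \mid J)$. The crucial point to verify here is that the two residual entropies condition on the \emph{same} set: both $H(A_i \mid A^{i-1}, B_{i+1}^n, J)$ and $H(B_i \mid B_{i+1}^n, A^{i-1}, J)$ condition on $(X_{i+1}^n, Y^{i-1}, E^{i-1}, E_{i+1}^n, J) = (X_{i+1}^n, Y^{i-1}, E^{-i}, J)$, which is precisely $U_i$ (with $E_i$ excluded). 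This produces the middle lines of \eqref{eq:floatingequation_3}.

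The final step passes from the joint entropies of the pairs back to the conditional entropies in the statement. Splitting $H(Y^n, E^n \mid J) = H(E^n \mid J) + H(Y^n \mid E^n, J)$ and likewise for $X$ cancels the common $H(E^n \mid J)$ in the difference, and applying the analogous split $H(A_i \mid U_i) = H(E_i \mid U_i) + H(Y_i \mid E_i, U_i)$ (and its $X$-counterpart) inside the sum cancels each per-letter $H(E_i \mid U_i)$. What survives, after rearranging, is
\[
H(X^n \mid E^n, J) - H(Y^n \mid E^n, J) = \sum_{i=1}^n [\,H(X_i \mid E_i, U_i) - H(Y_i \mid E_i, U_i)\,],
\]
with $U_i = (X_{i+1}^n, Y^{i-1}, E^{-i}, J)$, which is the claim.

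I expect the only delicate part to be the index bookkeeping in the telescoping step: one must confirm that the forward chain rule applied to the $(Y,E)$ sequence and the backward chain rule applied to the $(X,E)$ sequence leave behind \emph{identical} residual conditioning sets, so that the two single-letter sums merge into one; once that is checked, the cancellation of the $E$-entropy terms is routine. It is worth noting that the argument is purely algebraic in the joint law $P(j, x^n, y^n, e^n)$ and uses no independence or Markov assumption, so specializing to a degenerate $E^n$ recovers identity \eqref{Identity-intermediate} used in the proof of Theorem~\ref{Theorem_lossless_Yamamoto}.
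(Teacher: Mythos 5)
Your proposal is correct and is essentially identical to the paper's own proof: the paper also applies the Csisz\'ar sum identity to the pair sequences $(Y_i,E_i)$ and $(X_i,E_i)$ conditioned on $J$, telescopes the leading entropy terms to $H(Y^n,E^n|J)$ and $H(X^n,E^n|J)$, observes that both residual conditioning sets equal $(X_{i+1}^n, Y^{i-1}, E^{-i}, J)=U_i$, and cancels the common $H(E^n|J)$ and per-letter $H(E_i|U_i)$ terms, exactly as in \eqref{eq:floatingequation_3}. Your closing observation that the identity is purely algebraic (no Markov or independence assumptions) and specializes to \eqref{Identity-intermediate} when $E^n$ is degenerate is also consistent with how the paper uses the lemma.
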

\begin{proof}
 The proof is presented in \eqref{eq:floatingequation_3}, where $(a)$ follows from Cisz\'{a}r sum identity \cite[page 25]{networkinfotheory}, in $(b)$ we used the definition of $U_i$.
\end{proof}
\begin{theorem}\label{Theorem_lossless_Yamamoto_Eavesdropper}
The set of all achievable triples $(R_A, R_C, \Delta)$ for this model when Eve is provided with side information $E^n$ and $E^n\markov Y^n\markov X^n$, satisfies
\begin{eqnarray*}
  R_A &\geq & H(Y|V), \\
  R_C &\geq & I(Z; V),\\
  \Delta &\leq & I(X,Y; V)-I(X,Y; E|U)\\
  &&+H(X|E,U)-H(Y|E,U),
\end{eqnarray*}
for some $U$ and $V$ which form $(Z,E)\markov (X,Y)\markov U$ and $(X,Y,E)\markov Z\markov V$.
\end{theorem}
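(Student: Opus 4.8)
My plan is to follow the proof of Theorem~\ref{Theorem_lossless_Yamamoto} almost verbatim for the two rate bounds and to reserve the genuinely new work for the equivocation, which is now governed by the conditional privacy constraint \eqref{privacy-constraint_eavesdropper}. As in that proof, Fano's inequality applied to the reliability constraint \eqref{error-probability-constraint} gives $H(Y^n|J,K)\le n\eps_n$, and I keep the same auxiliary $V_i:=(Y^{i-1},X^{i-1},K)$. The chains establishing $R_A\ge H(Y|V)$ and $R_C\ge I(Z;V)$ then carry over unchanged, since neither the decoder's reliability nor Charlie's rate involves $E^n$. After introducing a uniform time-sharing variable $Q$ and setting $V:=(V_Q,Q)$, coordinate-wise independence of the i.i.d.\ source gives the Markov relation $(X,Y,E)\markov Z\markov V$ required in the statement.

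For the equivocation I would begin from $n(\Delta-\eps)\le H(X^n|E^n,J)$ and apply Lemma~\ref{lemma_single_letter}, with $U_i:=(X_{i+1}^n,Y^{i-1},E^{-i},J)$, to obtain
\[
  H(X^n|E^n,J)=H(Y^n|E^n,J)+\sum_{i=1}^n\bigl[H(X_i|E_i,U_i)-H(Y_i|E_i,U_i)\bigr].
\]
Setting $U:=(U_Q,Q)$, the sum single-letterizes to $n[H(X|E,U)-H(Y|E,U)]$, which supplies the last two terms of the bound, and the coordinate argument in Lemma~\ref{lemma_single_letter} gives $(Z,E)\markov(X,Y)\markov U$. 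It then remains to show
\[
  H(Y^n|E^n,J)\le n\eps_n+n\bigl[I(X,Y;V)-I(X,Y;E|U)\bigr].
\]

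This residual bound is where the argument departs from the eavesdropper-free case, and I expect it to be the main obstacle. The natural first move is $H(Y^n|E^n,J)=H(Y^n|J)-I(Y^n;E^n|J)$, bounding $H(Y^n|J)\le n\eps_n+I(X^n,Y^n;K)=n\eps_n+nI(X,Y;V)$ exactly as in Theorem~\ref{Theorem_lossless_Yamamoto} (telescoping $I(X^n,Y^n;K)=\sum_i I(X_i,Y_i;V_i)$ via the i.i.d.\ identity $I(X_i,Y_i;X^{i-1},Y^{i-1})=0$ and the Markov chain $J\markov(X^n,Y^n)\markov K$). This leaves the need to show that the reduction afforded by Eve's side information is at least the desired correction, i.e.\ $I(Y^n;E^n|J)\ge\sum_i I(X_i,Y_i;E_i|U_i)=nI(X,Y;E|U)$. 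Proving this inequality is the crux: a plain chain-rule expansion of $I(Y^n;E^n|J)$ does not produce the mixed past/future conditioning $(X_{i+1}^n,Y^{i-1},E^{-i})$ appearing in $U_i$, so I anticipate invoking the Csisz\'{a}r sum identity---just as in Lemma~\ref{lemma_single_letter}---to reorganize these terms into the required single-letter form; note that merely discarding the conditioning on $E^n$ would yield only the weaker bound $\Delta\le I(X,Y;V)+H(X|E,U)-H(Y|E,U)$, so the $E^n$ information must be tracked rather than thrown away. Once this inequality is in hand, combining the three displays, single-letterizing with $Q$, and appealing to the support lemma \cite{csiszarbook} for the cardinality bounds completes the proof.
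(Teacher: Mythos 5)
Your setup is sound and matches the paper's strategy: the rate bounds do carry over from Theorem~\ref{Theorem_lossless_Yamamoto}, Lemma~\ref{lemma_single_letter} is the right starting point for the equivocation, and you have correctly isolated the crux, namely showing $I(Y^n;E^n|J)\geq \sum_{i=1}^n I(X_i,Y_i;E_i|U_i)$. But that is exactly where your proposal stops: you ``anticipate invoking the Csisz\'{a}r sum identity'' rather than proving the inequality, and this anticipation points at the wrong tool. The Csisz\'{a}r sum identity is an unconditional identity valid for every joint distribution, so it cannot by itself yield an inequality that genuinely depends on the hypothesis $E^n\markov Y^n\markov X^n$ --- a hypothesis your sketch never invokes at this step, even though the theorem is false without it. Its role in this proof is confined to the interior of Lemma~\ref{lemma_single_letter}; the residual bound requires a different argument.

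The paper closes the gap as follows. Since $J=f_A(X^n,Y^n)$ and $E^n\markov Y^n\markov X^n$, one has the Markov chain $E^n\markov Y^n\markov J$, which gives the \emph{exact} identity $I(Y^n;E^n|J)=I(Y^n;E^n)-I(E^n;J)$; this is the step that tracks Eve's side information rather than discarding it. The same source Markov chain gives $I(Y^n;E^n)=I(X^n,Y^n;E^n)=\sum_{i=1}^n I(X_i,Y_i;E_i)$ by the i.i.d.\ structure, while $I(E^n;J)=\sum_{i=1}^n I(E_i;J|E^{i-1})=\sum_{i=1}^n I(E_i;J,E^{i-1})\leq \sum_{i=1}^n I(E_i;U_i)$, the equality using independence of $E_i$ from $E^{i-1}$ and the inequality holding because $(J,E^{i-1})$ is a component of $U_i=(X_{i+1}^n,Y^{i-1},E^{-i},J)$. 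Finally, the per-letter Markov chain $U_i\markov (X_i,Y_i)\markov E_i$ (which holds by the i.i.d.\ assumption and $E_i\markov Y_i\markov X_i$) recombines the two single-letter terms: $I(X_i,Y_i;E_i)-I(E_i;U_i)=I(X_i,Y_i;E_i|U_i)$. Putting these together yields precisely $I(Y^n;E^n|J)\geq\sum_{i=1}^n I(X_i,Y_i;E_i|U_i)$, and time-sharing finishes the proof. So your outline is recoverable, but the missing step is the heart of the theorem, and filling it requires the eavesdropper's Markov assumption in two distinct places rather than another application of the Csisz\'{a}r sum identity.
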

\begin{proof}
The lower bounds for both $R_A$ and $R_C$ follow along the same lines as in the proof of Theorem~\ref{Theorem_lossless_Yamamoto}. We shall show the upper bound for the equivocation. We note that since Bob is required to reconstruct $Y^n$ losslessly, Fano's inequality implies that
\begin{equation}\label{Fano_2}
    H(Y^n|J, K)\leq n\eps_n
\end{equation}
for $\eps_n\to 0$ as $n\to \infty$. As before, let $J=f_A(X^n, Y^n)$ and $K=f_C(Z^n)$.

The upper bound for the equivocation is obtained in \eqref{eq:floatingequation_4}%
\addtocounter{equation}{3} shown on top of the next page where $(a)$ follows from Lemma~\ref{lemma_single_letter} and $(b)$ is due to \eqref{Fano_2}. Since $K\markov (X^n, Y^n)\markov J$ and $E^n\markov Y^n\markov J$, we have $I(X^n, Y^n; K|J)\leq I(X^n, Y^n; K)$ and $I(Y^n; E^n|J)=I(Y^n; E^n)-I(E^n; J)$ and hence $(c)$ follows. We again used the Markov chain relation $E^n\markov Y^n\markov X^n$ in $(d)$. The definition $V_i:=(K, X^{i-1}, Y^{i-1})$ and the fact that $I(E_i; J, E^{i-1})\leq I(E_i; U_i)$ are used in $(e)$.  Note that since $U_i\markov (X_i, Y_i)\markov E_i$ we have in $(f)$ that $I(X_i, Y_i; E_i|U_i)=I(X_i, Y_i; E_i)-I(E_i; U_i)$. The proof completes by introduction of a time sharing random variable $Q$ uniformly distributed over $\{1, 2,\dots, n\}$ and independent of $(X^n, Y^n, Z^n, E^n)$ and letting $X=X_Q, Y=Y_Q$, $E=E_Q$, $V=(V_Q, Q)$ and $U=(U_Q,Q)$.
\end{proof}
\begin{figure*}[!t]
\normalsize
\setcounter{tempequationcounter}{\value{equation}}
\begin{IEEEeqnarray}{rCl}
\setcounter{equation}{8}
    H(X^n|E^n, J)&\stackrel{(a)}{=}& H(Y^n|E^n, J)+\sum_{i=1}^n [H(X_i|E_i, U_i)-H(Y_i|E_i, U_i)]\nonumber\\
    &=& H(Y^n|J, K)+I(Y^n; K|J)-I(Y^n; E^n|J)+\sum_{i=1}^n [H(X_i|E_i, U_i)-H(Y_i|E_i, U_i)]\nonumber\\
    &\stackrel{(b)}{\leq}& n\eps_n+I(X^n,Y^n; K|J)-I(Y^n; E^n|J)+\sum_{i=1}^n [H(X_i|E_i, U_i)-H(Y_i|E_i, U_i)]\nonumber\\
    &\stackrel{(c)}{\leq}& n\eps_n+I(X^n,Y^n; K)-I(Y^n; E^n)+I(E^n; J)+\sum_{i=1}^n [H(X_i|E_i, U_i)-H(Y_i|E_i, U_i)]\nonumber\\
    &\stackrel{(d)}{=}& n\eps_n+\sum_{i=1}^n [I(X_i,Y_i; K, X^{i-1}, Y^{i-1})-I(Y_i, X_i; E_i)+I(E_i; J, E^{i-1})\nonumber\\
    &&+ H(X_i|E_i, U_i)-H(Y_i|E_i, U_i)]\nonumber\\
    &\stackrel{(e)}{\leq}&n\eps_n+\sum_{i=1}^n [I(X_i,Y_i; V_i)-I(Y_i, X_i; E_i)+I(E_i; U_i)+ H(X_i|E_i, U_i)-H(Y_i|E_i, U_i)]\nonumber\\
    &\stackrel{(f)}{=}&n\eps_n+\sum_{i=1}^n [I(X_i,Y_i; V_i)-I(Y_i, X_i; E_i|U_i)+ H(X_i|E_i, U_i)-H(Y_i|E_i, U_i)]\nonumber\\
    &\stackrel{(g)}{=}& n\eps_n+ I(X_Q,Y_Q; V_Q, Q)-I(Y_Q, X_Q; E_Q|U_Q, Q)+ H(X_Q|E_Q, U_Q, Q)-H(Y_Q|E_Q, U_Q, Q)]\label{eq:floatingequation_4}
\end{IEEEeqnarray}
\setcounter{equation}{\value{tempequationcounter}}
\hrulefill
\end{figure*}
\begin{remark}
Setting $E^n=\emptyset$ and thus removing the eavesdropper's side information,  Theorem~\ref{Theorem_lossless_Yamamoto_Eavesdropper} yields $\Delta\leq I(X,Y;V)+H(X|U)-H(Y|U)$ and hence Theorem~\ref{Theorem_lossless_Yamamoto_Eavesdropper} subsumes Theorem~\ref{Theorem_lossless_Yamamoto}.
\end{remark}
In the simple case of $X=Y$, the optimal scheme when coded side information is available at Bob and $E^n=\emptyset$ is proposed in \cite{Secure_source_Tandon} which is shown to resemble  the binning scheme of Wyner in \cite{Wyner_source_coidng_wit_SI}. Although, a tight bound for the equivocation when $E^n$ is available is not yet known, Theorem~\ref{Theorem_lossless_Yamamoto_Eavesdropper}, specialized to $X=Y$, implies
$$\Delta\leq I(Y; V)-I(Y; E|U),$$
for auxiliary random variables $U$ and $V$ which form Markov chains $V\markov Z\markov (Y, E)$ and $U\markov Y\markov (Z, E)$.

\subsection{A Coding Scheme When Bob Has Uncoded Side Information}
As a special case, we consider the case where Alice does not see the private source and also $R_C>H(Z)$ (i.e., Bob has uncoded side information). In this case, Theorem~\ref{Theorem_lossless_Yamamoto_Eavesdropper} implies that the best achievable equivocation is upper bounded by $$\max[I(Y; Z)-I(Y; E|U)+H(X|E, U)-H(Y|E, U)],$$ where the maximization is taken over $U$ which forms the Markov chain relation $U\markov Y\markov (Z, E, X)$. In the following we give a simple coding scheme which incurs a smaller equivocation and is thus suboptimal. In fact, if the above maximization results in a $U$ which is independent of $Z$, then the following coding scheme is optimal. On the other hand, if the maximization results in a $U$ which is constant, then it implies that Slepian-Wolf binning is optimal, because if Alice uses Slepian-Wolf binning then the equivocation is equal to $H(X|E)-H(Y|Z)$, as observed in \cite{Secure_distributed_Vinod}.
\begin{theorem}\label{Theorem_lossless_Yamamoto_TightUnCoded_Eavesdropper}
When $X^n$ is not given to Alice and Bob observes side information $Z^n$,  then $(R_A, \Delta)$ which satisfies
\begin{eqnarray*}
  R_A &\geq & H(Y|Z), \\
  \Delta &\leq & I(Y; Z|U)-I(Y; E|U)\\
  &&+H(X|E,U)-H(Y|E,U),
\end{eqnarray*}
is achievable where the auxiliary random variable $U$ forms the Markov chain $(X, Z, E)\markov\ Y\markov U$.
\end{theorem}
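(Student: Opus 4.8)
The plan is to use a layered Wyner--Ziv-plus-Slepian--Wolf random binning scheme built on a test channel $P(u|y)$ satisfying $(X,Z,E)\markov Y\markov U$, checking reliability and equivocation separately. The observation that drives everything is purely algebraic: the target equivocation collapses, since $I(Y;Z|U)-I(Y;E|U)=H(Y|E,U)-H(Y|Z,U)$, to
\[
I(Y;Z|U)-I(Y;E|U)+H(X|E,U)-H(Y|E,U)=H(X|E,U)-H(Y|Z,U).
\]
So it suffices to achieve rate $H(Y|Z)$ and equivocation $H(X|E,U)-H(Y|Z,U)$.

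For the scheme I would fix $P(u|y)$, generate a codebook of $2^{n(I(U;Y)+\delta)}$ sequences $u^n\sim\prod P(u_i)$, and randomly partition it into $2^{n(I(U;Y|Z)+\delta)}$ bins. Given $y^n$, Alice picks a $u^n$ jointly typical with $y^n$ and sends its bin index $J_1$; she then sends a Slepian--Wolf bin index $J_2$ of $y^n$ at rate $H(Y|U,Z)+\delta$, and sets $J=(J_1,J_2)$. Because $I(U;Y|Z)+H(Y|U,Z)=H(Y|Z)$, the total rate is $R_A=H(Y|Z)+2\delta$. Bob decodes in two steps: using $Z^n$ and $J_1$ he recovers $u^n$ as the unique codeword in bin $J_1$ jointly typical with $z^n$ (reliable since $R_1>I(U;Y)-I(U;Z)$), and then recovers $y^n$ from $J_2$ and $(u^n,z^n)$; a routine joint-typicality/packing analysis makes the error probability vanish.

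The heart of the proof is the equivocation bound. Since $J_1$ is a deterministic function of the selected $U^n$ given the codebook, conditioning reduces entropy to give
\[
H(X^n|E^n,J)\ \geq\ H(X^n|E^n,U^n,J_1,J_2)\ =\ H(X^n|E^n,U^n,J_2)\ \geq\ H(X^n|E^n,U^n)-nR_2,
\]
with $R_2=H(Y|U,Z)+\delta$. It then remains to show $H(X^n|E^n,U^n)\geq nH(X|E,U)-o(n)$, which together with the displayed identity yields $\tfrac1n H(X^n|E^n,J)\geq\Delta-\eps$. The Markov structure points to the right per-letter target: because $U^n$ is a function of $Y^n$ and the codebook (independent of $(X^n,E^n)$ given $Y^n$) and $E^n\markov Y^n\markov X^n$, the chain $X^n\markov Y^n\markov(E^n,U^n)$ holds, so the relevant single-letter quantity is indeed $H(X|E,U)$.

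The main obstacle is that the covering encoder does \emph{not} produce $U^n$ with the i.i.d.\ law $\prod P(u_i|y_i)$, so $H(X^n|E^n,U^n)=nH(X|E,U)$ does not hold exactly, and the crude data-processing bounds (e.g.\ $I(X^n;U^n|E^n)\leq H(U^n)\approx nI(U;Y)$) are far too loose. I would close this gap with a soft-covering / resolvability (likelihood-encoder) argument: since the covering rate exceeds $I(U;Y)$, the joint law of $(X^n,Y^n,E^n,U^n)$ induced by the encoder is within exponentially small total variation of $\prod P(x_i,y_i,e_i,u_i)$, and continuity of (conditional) entropy then gives $H(X^n|E^n,U^n)\geq nH(X|E,U)-o(n)$. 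A standard expurgation of atypical codebooks and the cardinality bound on $U$ finish the argument.
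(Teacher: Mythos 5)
Your proposal is correct in substance and follows the same coding scheme as the paper: the identical superposition of a Wyner--Ziv-style covering code for $U$ (rate $I(U;Y)+\delta$, binned at rate $I(U;Y)-I(U;Z)=I(U;Y|Z)$) with a Slepian--Wolf bin index for $Y^n$ at rate $H(Y|U,Z)$, the same total rate $H(Y|Z)$, and the same equivocation decomposition $H(X^n|E^n,J)\geq H(X^n|E^n,U^n)-H(J_2)$ followed by the single-letterization $H(X^n|E^n,U^n)\gtrsim nH(X|E,U)$; your opening identity $I(Y;Z|U)-I(Y;E|U)+H(X|E,U)-H(Y|E,U)=H(X|E,U)-H(Y|Z,U)$ is exactly the rewriting the paper performs in the last line of its equivocation chain, just applied in the reverse direction.

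Where you genuinely diverge is the proof of the key step $H(X^n|E^n,U^n)\geq nH(X|E,U)-o(n)$. The paper proves this directly by a typicality computation (its equation (10)): restrict to typical $(u^n,e^n)$, bound $P(x^n|u^n,e^n)\leq 2^{-n(H(X|U,E)-\delta_n)}$ for conditionally typical $x^n$, and invoke the Markov lemma for the chain $(X,E)\markov Y\markov U$ to show the conditionally typical set carries probability $1-\delta'_n$. This is elementary and self-contained, but it quietly treats the code-induced conditional law $P(x^n|u^n,e^n)$ as if i.i.d.\ typicality bounds applied to it --- precisely the obstacle you identify. Your soft-covering/resolvability route attacks that obstacle head-on: TV-closeness of the induced joint law to $\prod P(x_i,y_i,e_i,u_i)$, then continuity of conditional entropy (the bound scales as $\eps\, n\log|\X|+h(\eps)$, which vanishes because the TV gap is exponentially small). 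This is arguably the more rigorous argument, at the cost of heavier machinery. One inconsistency to fix: you describe a deterministic joint-typicality encoder but then invoke the likelihood-encoder soft-covering result; the exponential TV closeness is not standard for the typicality encoder, so you should commit to the likelihood encoder throughout and redo the (routine) reliability analysis under it, or else replace resolvability by the paper's Markov-lemma argument. With that repair, both proofs establish the theorem.
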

\begin{proof}
Our scheme is similar to the ones proposed in \cite{Secure_lossless_compression} and \cite{curt_serecy_Cheap}.
 Given $Y^n$, we generate $2^{n(I(Y; U)+\eps)}$ independent codewords of length $n$, $U^n(w)$, $w\in \{1, 2, \dots, 2^{nI(Y; U)+\eps}\}$ according to $\prod_{i=1}^nP(u_i)$. We then uniformly bin all the $U^n$ sequences into $2^{n(I(Y; U)-I(U; Z))}$ bins. Let $B(i)$ be the indices assigned to bin $i$. There are approximately $2^{nI(U;Z)}$ indices in each bin. We also uniformly bin $Y^n$ sequences into $2^{n(H(Y|U,Z)+\eps)}$ bins and let $C(k)$ be the set of sequences $Y^n$ in bin $k$.
\begin{figure*}[!t]
\normalsize
\setcounter{tempequationcounter}{\value{equation}}
\begin{IEEEeqnarray}{rCl}
\setcounter{equation}{9}
  H(X^n|U^n, E^n)&=&\sum_{(u^n, e^n)\in \U^n\times \mathcal{E}^n}P(u^n, e^n)H(X^n|U^n=u^n, E^n=e^n) \nonumber\\
  &\geq& \sum_{(u^n, e^n)\in \mathcal{T}^n_{U,E}}P(u^n, e^n)H(X^n|U^n=u^n, E^n=e^n) \nonumber\\
  &=&   \sum_{(u^n, e^n)\in \mathcal{T}^n_{U,E}}P(u^n, e^n)\left[-\sum_{x^n\in \X^n} P(x^n|u^n, e^n)\log (P(x^n|u^n, e^n))\right]\nonumber\\
   &\geq& \sum_{(u^n, e^n)\in \mathcal{T}^n_{U,E}}P(u^n, e^n)\left[-\sum_{x^n\in \T^n_{X|u^n, e^n}} P(x^n|u^n, e^n)\log (P(x^n|u^n, e^n))\right]\nonumber\\
   &\stackrel{(c)}{\geq}& n (H(Y|U, E)-\delta_n) \sum_{(u^n, e^n)\in \mathcal{T}^n_{U,E}}P(u^n, e^n)\left[\sum_{x^n\in \T^n_{X|u^n, e^n}}P(x^n|u^n, e^n)\right] \nonumber\\
   &=& n (H(Y|U, E)-\delta_n) \sum_{(u^n, e^n)\in \mathcal{T}^n_{U,E}}P(u^n, e^n) \left[\Pr\{(u^n, e^n, X^n)\in \T^n_{X|u^n, e^n}\}\right] \nonumber\\
   &\stackrel{(d)}{\geq}& n (H(Y|U, E)-\delta_n)(1-\delta'_n)\label{eq:floatingequation_7}
\end{IEEEeqnarray}
\setcounter{equation}{\value{tempequationcounter}}
\hrulefill
\end{figure*}
Alice adopts a two-part encoding scheme. Given $Y^n$, Alice, in the first part, looks for a codeword $U^n(w)$ such that $(Y^n, U^n(w))\in \A_{YU}^n$, where $\A_{YU}^n$ denotes the set of all strongly typical $(y^n, u^n)\in \Y^n\times \U^n$ with respect to the distribution $P(y,u)$. She then reveals the bin index $J_1$ such that $w\in B(J_1)$. In the second part, she reveals $J_2$ such that $Y^n\in C(J_2)$.

Given $J_1$, $J_2$ and $Z^n$, Bob can find, with high probability, $U^n(w)$ such that $w\in B(J_1)$ and $(U^n(w), Z^n)\in \A_{ZU}^n$. It is then clear from the Slepian-Wolf theorem that Bob can recover $Y^n$ with high probability given $U^n(w)$, $Z^n$, and $J_2$.

The rate of this encoder is clearly equal to $H(Y|U,Z)+I(Y; U)-I(U;Z)=H(Y|Z)$.

The equivocation for this scheme can be found as
\begin{eqnarray*}
 && H(X^n|J_1, J_2, E^n) \\ &= &H(X^n|J_1, E^n)-I(X^n; J_2|J_1, E^n) \\
   &\geq & H(X^n|U^n, E^n)-H(J_2) \\
   &\stackrel{(a)}{\geq} & H(X^n|U^n, E^n)-nH(Y|U,Z)\\
   &\stackrel{(b)}{\geq} & n[H(X|U, E)-H(Y|U,Z)]\\
   &=& n[H(X|E, U)-H(Y|E, U)\\
   && +I(Y; Z|U)-I(Y; E|U)],
\end{eqnarray*}
where $(a)$ follow from the fact that $J_2$ is a random variable over a set of size $2^{nH(Y|U,Z)}$ and $(b)$ is proved in  \eqref{eq:floatingequation_7} where $\mathcal{T}^n_{U,E}$ denotes the set of typical sequences $(u^n, e^n)$ and  $(c)$ is due to the property of typical sequences; in particular for  typical $x^n$ sequence with respect to $P(x^n|u^n, e^n)$ for $(u^n, e^n)\in \mathcal{T}^n_{U,E}$ we have $P(x^n| u^n, e^n)\leq 2^{-(n(H(X|U,E)-\delta(n)))}$ for $\delta_n\to 0$ as $n\to \infty$. We invoked Markov lemma \cite[Lemma 12.1]{networkinfotheory} in $(d)$ to conclude that for the Markov chain relation $(X,E)\markov Y\markov U$ we have $(x^n, y^n, e^n, u^n)\in\mathcal{T}^n_{X,Y,E,U}$ and hence $\Pr\{(u^n, e^n, X^n)\in \T^n_{U,E, X}\}>1-\delta'_n$ for each pair $(u^n, e^n)\in (u^n, e^n)\in \mathcal{T}^n_{U,E}$ and $\delta'_n\to 0$ as $n\to \infty$.
\end{proof}

\section{Concluding Remarks}
Having combined the idea of compression of private and public sources of Yamamoto \cite{yamamotoequivocationdistortion} with secure source coding problem (e.g. \cite{Secure_lossless_compression}, \cite{Secure_source_Tandon} and \cite{Secure_distributed_Vinod}), we introduced a lossless source coding problem in which, given a two-dimensional source $(X^n, Y^n)$, the encoder must compress the source into an index $J$ with rate $R_A$ such that the receiver recovers $Y^n$ losslessly and simultaneously reveals only little information about $X^n$. This model differs from typical information-theoretic secrecy models in that the utility and privacy constraints are defined for two different sources and thus provides a more general utility-equivocation tradeoff.

We gave converse results for compression rates and also the information leakage rate (or equivocation)  which reduce to known results in the special case of $X=Y$. In particular, with this simplifying assumption, Theorem~\ref{Theorem_lossless_Yamamoto} and Theorem~\ref{Theorem_lossless_Yamamoto_TightUnCoded_Eavesdropper} reduce to \cite[Theorem 1]{Secure_source_Tandon} and \cite[Corollary 3.2]{Secure_lossless_compression}.

However, it is not clear at the moment that the bounds are tight in general. Constructing an achievability scheme for the most general case (i.e., the setting of Theorem~\ref{Theorem_lossless_Yamamoto_Eavesdropper}) is the subject of our future studies.

\vspace{-0.05cm}
\bibliographystyle{IEEEtran}
\bibliography{bibliography}

\begin{thebibliography}{10}
\providecommand{\url}[1]{#1}
\csname url@samestyle\endcsname
\providecommand{\newblock}{\relax}
\providecommand{\bibinfo}[2]{#2}
\providecommand{\BIBentrySTDinterwordspacing}{\spaceskip=0pt\relax}
\providecommand{\BIBentryALTinterwordstretchfactor}{4}
\providecommand{\BIBentryALTinterwordspacing}{\spaceskip=\fontdimen2\font plus
\BIBentryALTinterwordstretchfactor\fontdimen3\font minus
  \fontdimen4\font\relax}
\providecommand{\BIBforeignlanguage}[2]{{%
\expandafter\ifx\csname l@#1\endcsname\relax
\typeout{** WARNING: IEEEtran.bst: No hyphenation pattern has been}%
\typeout{** loaded for the language `#1'. Using the pattern for}%
\typeout{** the default language instead.}%
\else
\language=\csname l@#1\endcsname
\fi
#2}}
\providecommand{\BIBdecl}{\relax}
\BIBdecl

\bibitem{yamamotoequivocationdistortion}
H.~Yamamoto, ``A source coding problem for sources with additional outputs to
  keep secret from the receiver or wiretappers,'' \emph{IEEE Trans. Inf.
  Theory}, vol.~29, no.~6, pp. 918--923, Nov. 1983.

\bibitem{Secure_distributed_Vinod}
V.~Prabhakaran and K.~Ramchandran, ``On secure distributed source coding,'' in
  \emph{IEEE Inf. Theory Workshop (ITW)}, Sept. 2007, pp. 442--447.

\bibitem{Secure_lossless_compression}
D.~G\"{u}nd\"{u}z, E.~Erkip, and H.~Poor, ``Secure lossless compression with
  side information,'' in \emph{Proc. IEEE Inf. Theory Workshop}, May 2008, pp.
  169--173.

\bibitem{Secrure_lossless_Gunduz_ISIT}
------, ``Lossless compression with security constraints,'' in \emph{IEEE Int.
  Sym. on Inf. Theory (ISIT)}, July 2008, pp. 111--115.

\bibitem{Secure_Source_Piantanida}
J.~Villard and P.~Piantanida, ``Secure multiterminal source coding with side
  information at the eavesdropper,'' \emph{IEEE Trans. Inf. Theory,}, vol.~59,
  no.~6, pp. 3668--3692, June 2013.

\bibitem{Secure_source_Tandon}
R.~Tandon, S.~Ulukus, and K.~Ramchandran, ``Secure source coding with a
  helper,'' \emph{IEEE Trans. Inf. Theory}, vol.~59, no.~4, pp. 2178--2187,
  April 2013.

\bibitem{Wyner_source_coidng_wit_SI}
A.~Wyner, ``On source coding with side information at the decoder,'' \emph{IEEE
  Trans. Inf. Theory,}, vol.~21, no.~3, pp. 294--300, May 1975.

\bibitem{Secure_lossy_coding}
E.~Ekrem and S.~Ulukus, ``Secure lossy source coding with side information,''
  in \emph{Proc. Annual Allerton Conference on Communication, Control, and
  Computing}, Sept. 2011, pp. 1098--1105.

\bibitem{csiszarbook}
I.~Csisz\'{a}r and J.~K\"{o}rner, \emph{Information Theory: Coding Theorems for
  Discrete Memoryless Systems}.\hskip 1em plus 0.5em minus 0.4em\relax
  Cambridge University Press, 2011.

\bibitem{networkinfotheory}
Y.~H. Kim and A.~E. Gamal, \emph{Network Information Theory}.\hskip 1em plus
  0.5em minus 0.4em\relax Cambrdige University press, 2012.

\bibitem{curt_serecy_Cheap}
C.~Schieler and P.~Cuff, ``Secrecy is cheap if the adversary must
  reconstruct,'' in \emph{Proc. IEEE Int. Symp. on Inf. Theory (ISIT),}, July
  2012, pp. 66--70.

\end{thebibliography}
\end{document}